\newif\ifarxiv
\newcommand{\F}{\mathbb{F}_2}
\newcommand{\cnot}{\mathrm{CNOT}}
\newtheorem{thm}{Theorem}
\theoremstyle{definition}
\newtheorem{lem}[thm]{Lemma}
\theoremstyle{definition}
\newtheorem{defn}{Definition}
\theoremstyle{definition}
\newtheorem{eg}{Example}
\newcommand\abstracttext{This paper presents a system for solving binary-valued linear equations using quantum computers.
The system is called Mod2VQLS, which stands for Modulo2 Variational Quantum Linear Solver.
As far as we know, this is the first such proposal. 
The design is a classical-quantum hybrid. The quantum components are a new circuit design for 
implementing matrix multiplication modulo 2, and a variational circuit to be optimized. The classical components
are the optimizer which measures the cost function and updates the quantum parameters for each iteration, and 
the controller that runs the quantum job and classical optimizer iterations.
We propose two alternative ansatze or templates for the variational circuit, and present results showing
that the rotation ansatz designed specifically for this problem provides the most direct path to a valid solution. Numerical experiments in low dimensions indicate that Mod2VQLS, using the
custom rotations ansatz, is on-a-par with the block Wiedemann algorithm,
the best-known to date for this problem.
}
\abstract{\abstracttext} 
\begin{document} 

\ifarxiv
    \title{Mod2VQLS: a Variational Quantum Algorithm for Solving Systems of Linear Equations Modulo 2}
    \author{Willie Aboumrad and Dominic Widdows}
    \affil{IonQ, Inc.} 
    \date{\today}
    \maketitle
    \abstract{\abstracttext}
\fi

\setlength{\headheight}{17.38797pt}

\section{Introduction}

This report describes a new quantum hybrid algorithm for solving systems of linear equations modulo $2$.
In such problems, we want to find an $n$-vector $x$ such that $Ax = b$, where $A$ is an $m \times n$ matrix and $b$ is an $m$-vector, all values are $0$ or $1$, and arithmetic is performed modulo $2$, so that $1 + 1 = 0$.

Solving systems of simultaneous linear equations is a standard
problem, and is most familiar in situations where the coefficients in the equations are real or complex numbers. When the coefficient matrix is not square, direct methods leveraging the LU or QR decompositions emerge as the natural choice. Obtaining these factorizations is costly, incurring a computational cost of $O(m n^2)$ \cite{trefethen1997numerical}. When $A$ is square, and moreover enjoys additional structural properties like invertibility, symmetry, or positive-definiteness, faster iterative methods including various Krylov subspace iterations and Conjugate-Gradient are available \cite{golub1996matrix}. 

Linear equations with binary coefficients are less ubiquitous than those with real coefficients,
but still they have key mathematical applications and commercial uses.
One of these is integer factorization,
in which we want to find large square numbers that are products of combinations of numbers whose prime factors are known, which means we want their exponents to be even \citep{pomerance1996tale,aboumrad2023lattice}. 
The solution of some Boolean satisfiability problems can be 
optimized using Gaussian elimination with binary arithmetic,
especially for cases with XOR constraints \citep{soos2009extending}. (Quantum approaches to more general Boolean satisfiability problems
are surveyed by \citet{alonso2022engineering}.)
Applications in cryptography and cryptanalysis, such as finding inverses of elements in the finite field $GF(2^m)$, have encouraged
research on optimizing solutions of binary linear systems 
\citep{rupp2006parallel,wang2016solving}. A component that solves linear systems modulo 2 is therefore 
required in various applications; choosing an optimal approach typically depends on features including
the dimension, number of equations, and sparsity; and research 
on this problem is ongoing \cite{hu2022universal}.

This paper presents a new quantum hybrid algorithm for solving 
this problem, which as far as we know is the first to apply quantum computing to solving binary linear systems.
The solution approach has two steps. First, we define a quantum circuit implementing matrix-vector products over the relevant finite vector spaces. Our circuit has one gate for each non-zero entry in the coefficient matrix. Then we derive a variational cost function that can be optimized in order to produce solutions to the given system. This frames the problem as one of optimizing a variational quantum circuit, which has become a standard approach in quantum machine learning
on NISQ (noisy intermediate-scale quantum) hardware \citep[Ch 5]{schuld2021machine}).

We deal with arbitrary unstructured matrices with entries in the finite field with two elements. Since the algorithm we present here leverages sparsity in the coefficient matrix, we comment on its computational complexity in relation to the state-of-the art block Wiedemann method in Section \ref{sec:related}. We note that the block Wiedemann method has been used in recent record-breaking RSA factorization calculations \cite{boudot2020difficulty}.

The rest of the paper is organized as follows. Section \ref{sec:intro_gates} introduces quantum gates and circuits used throughout the paper. Section \ref{sec:mat-vec-product} explains the construction of a quantum circuit that implements 
the crucial matrix-vector multiplication operation for binary-valued matrices and vectors. Section \ref{sec:vqa} pairs this circuit with a variational component that can be optimized in a hybrid quantum-classical computing framework in order to solve linear systems.
In particular, this section introduces a simple rotations ansatz which is especially
well-suited to the problem.
Section \ref{sec:results} presents initial proof-of-concept experiments, showing
that in low dimensions, with the rotations ansatz, the number of iterations needed to 
find a solution scales linearly in the number of dimensions (with a scaling factor around 4).
Section \ref{sec:related} compares the method introduced here with established classical and quantum alternatives.

We shall use the following notation throughout. We let $\F$ denote the field with the two elements ${0, 1}$. (In other works, $\F$ is sometimes
written as $\mathbb{Z}_2\cong\mathbb{Z}/2\mathbb{Z}$, or $GF(2)$, where
$GF$ stands for Galois Field.)
Addition in $\F$ is written using the symbol $\oplus$, so that $0 \oplus 0 = 0$, $0 \oplus 1 = 1 \oplus 0 = 1$, 
and $1 \oplus 1 = 0$. 
We fix an $m \times n$ matrix $A$ and an $m$-vector $b$ with $a_{ij}, b_i \in \F$. In addition, we let $x$ denote an arbitrary element of $\F^n$.
The problem of solving the linear system is to find $x$ such that $Ax = b$, in this case over $\F^m$ (so all 
arithmetic is modulo $2$). 

\section{Quantum Circuits and Gates Used in this Paper}
\label{sec:intro_gates}

This background section briefly reviews the quantum gates used in the 
circuits below. These include the single-qubit Pauli-$X$ gate and fractional $Y$-rotation gate $R_Y(\theta)$ (Figure \ref{fig:single-qubit-gates}), and the 2-qubit CNOT and controlled-$Z$ gates (Figure \ref{fig:cnot-gate}). 

The Pauli-$X$ gate is commonly used to flip a qubit between the $\ket{0}$ and $\ket{1}$ gates, which is why it is also sometimes called the quantum NOT gate. $X$-gates directed at different qubits can be used to prepare
an input state representing a binary-valued vector: the state 
$\ket{010 ... \ldots ... 001}$ is prepared by applying an $X$-gate to each of the qubits to be switched to the $\ket{1}$ state.

The Hadamard (H) gate is commonly used to put a qubit into a superposition state: for example, it maps a qubit
prepared in the state $\ket{0}$ to the superposition $\frac{1}{\sqrt{2}}(\ket{0} + \ket{1})$. 
Applying an H-gate to each qubit in an array is used to initialize a binary vector all of whose coordinates have
a 50-50 chance of being observed in the $\ket{0}$ or $\ket{1}$ state.

The CNOT gate is a 2-qubit entangling gate, that acts upon the
    state $\alpha\ket{00} + \beta\ket{01} + \gamma\ket{10} + \delta{\ket{11}}$. In the standard basis, its behavior can be described
    as ``performing a NOT operation on the target qubit if the control qubit is in state $\ket{1}$'', which in terms of operations in 
    $\F$ can be written as $\cnot \ket{y, z} = \ket{y, y \oplus z}$.

The Pauli-$X$, Hadamard, CNOT, and CX gates are self-inverse: performing these operations twice gives the identity map. These periodic properties are crucial for performing the binary arithmetic operations in the matrix-vector product of Section \ref{sec:mat-vec-product}. 

The $R_Y(\theta)$ rotates a single-qubit through an angle $\theta$ 
around the $Y$-axis on the Bloch sphere \citet[Ch 1]{nielsen2002quantumcomputation}. The angle $\theta$ can be varied,
and optimizing these angles for many gates is the task of the variational
algorithm in Section \ref{sec:vqa}. The controlled-$Z$ gate is similar 
to the CNOT gate and entangles 2 qubits: in the standard basis, its
action is symmetric (in the sense that it doesn't matter which qubit is considered to be the control and which the target qubit).

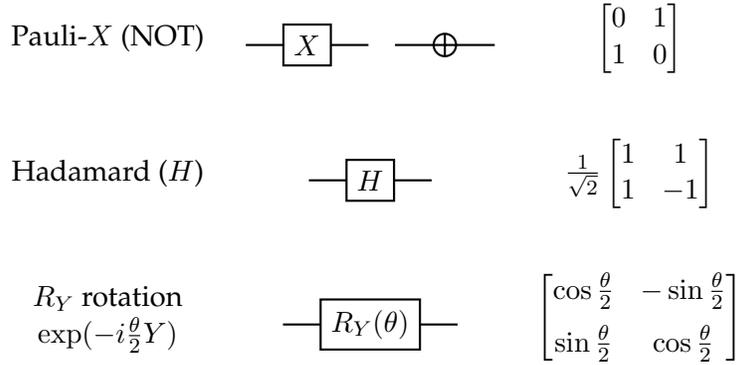
\begin{figure}
    \centering

\begin{tabular}{ccc}
Pauli-$X$ (NOT)
& 
\begin{tikzcd} \qw & \gate{X} & \qw \end{tikzcd}
\begin{tikzcd} \qw & \targ{} & \qw \end{tikzcd}
&
$\begin{bmatrix} 0 & 1 \\ 1 & 0 \end{bmatrix}$
\\[6 ex]

Hadamard ($H$)
& 
\begin{tikzcd} \qw & \gate{H} & \qw \end{tikzcd}
&
$\frac{1}{\sqrt{2}}\begin{bmatrix} 1 & 1 \\ 1 & -1 \end{bmatrix}$
\\[6 ex]

\begin{tabular}{c}
$R_Y$ rotation \\
$\exp(-i\frac{\theta}{2}Y)$

\end{tabular}
& 
\begin{tikzcd} \qw & \gate{R_Y(\theta)} & \qw \end{tikzcd}
&
$\begin{bmatrix} 
\cos \frac{\theta}{2} & -\sin \frac{\theta}{2} \\[6pt] 
\sin \frac{\theta}{2} & \cos \frac{\theta}{2}
\end{bmatrix}$

\\[6 ex]

\end{tabular}    
    \caption{Single-qubit gates used in this paper, and their corresponding matrices,  
    which operate on the superposition state $\alpha\ket{0} + \beta\ket{1}$ written as the column vector $\begin{bmatrix} \alpha & \beta \end{bmatrix}^T$.}
    \label{fig:single-qubit-gates}
\end{figure}

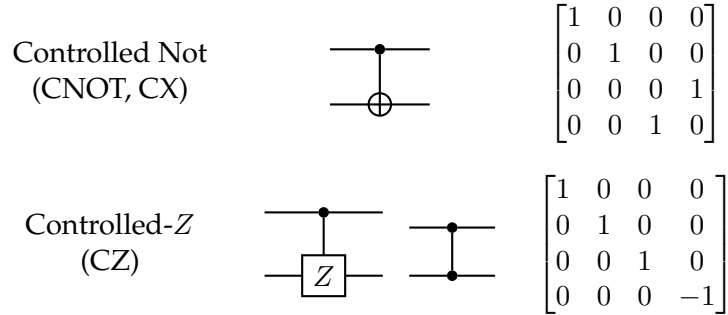
\begin{figure}
    \centering
\begin{tabular}{ccc}
\begin{tabular}{c}
Controlled Not \\ (CNOT, CX)
\end{tabular}
& 
\begin{tikzcd} 
\qw & \ctrl{1} & \qw \\
\qw & \targ{} & \qw
\end{tikzcd}
&
$\begin{bmatrix}
1 & 0 & 0 & 0 \\
0 & 1 & 0 & 0 \\
0 & 0 & 0 & 1 \\
0 & 0 & 1 & 0 
\end{bmatrix}$
\\[6 ex]

\begin{tabular}{c}
Controlled-$Z$ \\ (CZ)
\end{tabular}
& 
\begin{tikzcd} 
\qw & \ctrl{1} & \qw \\
\qw & \gate{Z} & \qw
\end{tikzcd}
\begin{tikzcd} 
\qw & \ctrl{1} & \qw \\
\qw & \control{} & \qw
\end{tikzcd}
&
$\begin{bmatrix}
1 & 0 & 0 & 0 \\
0 & 1 & 0 & 0 \\
0 & 0 & 1 & 0 \\
0 & 0 & 0 & -1 
\end{bmatrix}$
\\

\end{tabular}    
    \caption{Two-qubit CNOT (controlled-$X$) and controlled-$Z$ gates}
    \label{fig:cnot-gate}
\end{figure}

\section{Implementing the Matrix-Vector Product as a Quantum Circuit}
\label{sec:mat-vec-product}

This section introduces a quantum circuit that implements the binary-valued matrix-vector product $Ax$, where $A$ is a matrix and $x$ is a vector, and both have values in $\F$ as defined above.

The trick is to notice that binary arithmetic may be implemented with controlled-NOT operations. The NOT gate switches the state of an individual qubit between $\ket{0}$ and $\ket{1}$,
and the $\cnot$ gate performs such an operation only if the so-called \textit{control} qubit is in state $\ket{1}$. In particular, the $\cnot$ gate acts on the two-qubit state $\ket{y, z}$ as
\begin{equation}\label{eq:cnot action}
\cnot \ket{y, z} = \ket{y, y \oplus z}.
\end{equation}

Now let $\ket{x} = \ket{x_1, x_2, \ldots, x_n}$ denote the $n$-qubit quantum state corresponding to $x$ and notice that
\[
Ax \equiv
\begin{bmatrix}
    a_{11} x_1 \oplus \cdots \oplus a_{1n} x_n \\
    \vdots \\
    a_{m1} x_1 \oplus \cdots \oplus a_{mn} x_n
\end{bmatrix}
\mod 2.
\]
Thus the $m$-qubit state $\ket{Ax}$ can be prepared by applying the quantum circuit
\begin{equation}\label{eq:mat-vec prod}
    \mathcal{A} = \prod_{j = 1}^n \mathcal{A}_j,
    \quad \text{with} \quad
    \mathcal{A}_j = \prod_{i=1}^m \cnot(j, n + i)^{a_{ij}}
\end{equation}
to the tensor product $\ket{x}\ket{0}$. 
Here $\cnot(k, \ell)$ denotes a $\cnot$ gate controlled by the $k^{\mathrm{th}}$ qubit and targeting the $\ell^{\mathrm{th}}$ one.
The product operator $\Pi$ in these definitions denotes composition, so it is implemented by applying the gates in sequence.
Both products in Relation \eqref{eq:mat-vec prod} may be taken in any order. We note that $\mathcal{A}$ requires $m + n$ qubits and $N$ quantum gates, with $N$ denoting the number of non-zero entries in $A$.

In particular, we obtain the following theorem.

\begin{thm}\label{thm:mat-vec qc}
    Using $\mathcal{A}$ as in Relation \eqref{eq:mat-vec prod}, with $\ket{0}$ denoting the $m$-qubit all-zero state, and
    with $\ket{x}\ket{0}$ denoting the concatenation / tensor product of $\ket{x}$ and $\ket{0}$,
    we have 
\[
    \mathcal{A} (\ket{x}\ket{0}) = \ket{x} \ket{Ax}.
\]
\end{thm}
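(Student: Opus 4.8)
The plan is to compute the action of $\mathcal{A}$ on an arbitrary computational basis input $\ket{x}\ket{0}$ by tracking the two registers separately, relying only on the single-gate rule \eqref{eq:cnot action}. First I would observe that every gate appearing in $\mathcal{A}$ is a $\cnot$ whose control lies in the first register (one of qubits $1,\ldots,n$) and whose target lies in the second register (one of qubits $n+1,\ldots,n+m$). Since a $\cnot$ leaves its control qubit untouched and we start from a computational basis state, the first register retains the value $\ket{x}$ at every stage; this immediately gives the first tensor factor and shows that throughout the computation each control qubit $j$ carries the fixed classical bit $x_j$.

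Next I would track a single target qubit $n+i$. The only gates acting on it are the factors $\cnot(j,n+i)^{a_{ij}}$ for $j=1,\ldots,n$. Because the controls hold the fixed bits $x_j$, each such factor acts on qubit $n+i$ as the map $z\mapsto z\oplus a_{ij}x_j$ (the identity when $a_{ij}=0$, and $z\mapsto z\oplus x_j$ when $a_{ij}=1$). These are all XORs of constants into the same qubit, so composing them over $j$ and starting from $z=0$ yields $\bigoplus_{j=1}^n a_{ij}x_j$, which is exactly the $i$-th coordinate $(Ax)_i$. Running this over all $i$ gives the second factor $\ket{Ax}$, and since the gates targeting distinct qubits $n+i$ act on disjoint qubits the whole state factorizes as $\ket{x}\ket{Ax}$.

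The point that needs the most care is justifying that the two products in \eqref{eq:mat-vec prod} may be taken in any order, i.e. that the per-qubit and per-coordinate bookkeeping above is legitimate regardless of gate ordering. The clean way to handle this is to note that, on a computational basis state, every gate in $\mathcal{A}$ acts as a fixed bit-flip (or identity) on its target and does nothing to any other qubit; such elementary operations on the target register commute, so the accumulated XORs may be summed in any order. If one prefers to avoid the commutativity argument, the same conclusion follows by a straightforward induction on the number of $\cnot$ gates, applying \eqref{eq:cnot action} once per gate and observing that each application adds the term $a_{ij}x_j$ to coordinate $i$ of the target register while leaving the first register fixed.
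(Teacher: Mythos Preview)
Your argument is correct and follows essentially the same idea as the paper's proof, namely that each $\cnot$ leaves the control register fixed at $\ket{x}$ and contributes an XOR term $a_{ij}x_j$ to the $i^{\mathrm{th}}$ output qubit via Relation~\eqref{eq:cnot action}. The only difference is organizational: the paper sketches an induction on $n$ (the number of input qubits/columns), whereas you analyze each target qubit $n+i$ directly and invoke commutativity of the bit-flips; both routes are equally valid and equally short.
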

\begin{proof}
    The proof follows from an easy argument by induction on $n$ that is left to the reader: the base case $n = 1$ can be easily established for all $m$ by verifying that the $i^{\mathrm{th}}$ qubit in the output register is zero precisely when $a_{ij} = 0$, and the inductive step follows from a quick calculation considering the inductive hypothesis and the action of the $\cnot$ gate, as in Relation \eqref{eq:cnot action}.
\end{proof}

\begin{eg}
\label{ex:linear_system}
The following example elucidates Theorem \ref{thm:mat-vec qc}. For concreteness, consider the $2 \times 3$ matrix
\[
    A = \begin{bmatrix}
        1 & 0 & 1 \\
        1 & 1 & 0
    \end{bmatrix}
    \quad \text{so that} \quad
    A x \equiv \begin{bmatrix}
        x_1 \oplus x_3 \\
        x_1 \oplus x_2
    \end{bmatrix} 
    \mod 2.
\]
Then
\begin{align*}
    \mathcal{A}_1 = \cnot(1, 4) \cnot(1, 5), \quad
    \mathcal{A}_2 = \cnot(2, 5), \quad \text{and} \quad
    \mathcal{A}_3 = \cnot(3, 4),
\end{align*}
and therefore
\begin{align*}
    \mathcal{A} \ket{x} \ket{0, 0} = \mathcal{A}_3 \mathcal{A}_2 \ket{x} \ket{x_1, x_1} 
    = \mathcal{A}_3 \ket{x} \ket{x_1, x_1 \oplus x_2} 
    = \ket{x} \ket{x_1 \oplus x_3, x_1 \oplus x_2}
    = \ket{x, Ax}.
\end{align*}

The diagram in Figure \ref{fig:mat_vec_prod_qc} shows how the operator $\mathcal{A}$ is implemented explicitly as a quantum circuit.

\begin{figure}[H]
    \centering
\includegraphics[width=1.8in]{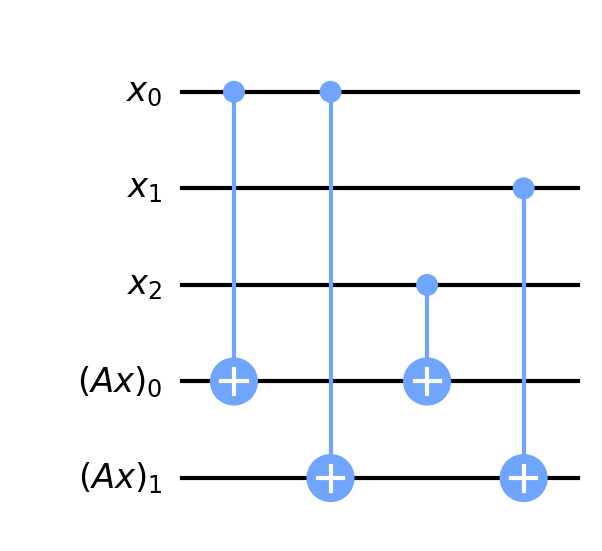}
    \caption{Quantum circuit implementing the operator $\mathcal{A}$}
    \label{fig:mat_vec_prod_qc}
\end{figure}
\centering
\end{eg}

These circuits work because the self-inverse behavior of the $X$ and CNOT gates is ideal for binary arithmetic over $\F$. 
It is possible that the periodic nature of other gates could be used similarly to perform arithmetic operations over other finite fields, but such an adaptation is not straightforward unless we consider qu\textit{d}its instead of qu\textit{b}its, because the way several CNOT gates
combine angles into a single target qubit does not perform like group addition for fields other than $\F$.
More precisely, the $X$ and $\cnot$ gates rotate their target qubits through angles $0$ and $\pi$, which is all
we need for the $0$ and $1$ elements of $\F$, but if this is extended to a larger set of angles $2\pi / p$ for $p\neq 2$, the use of CNOT gates to combine different contributions into a single target qubit as in Figure \ref{fig:adder_circuit} is nonlinear \cite{widdows2022near}. It follows that the coordinates of the output vector $b$ are not the same as the sums
of the various inputs $(Ax)_j$, except for the 2-element field $\F$.

\begin{figure}[H]
    \centering
    \includegraphics[width=6cm]{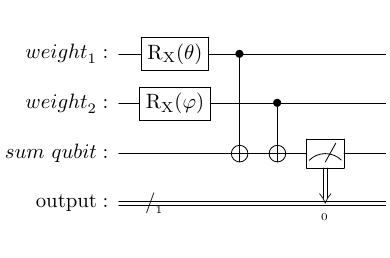}
    \caption{An `adder circuit' that combines the values of $\theta$ and $\phi$ into the target qubit, whose
    probability of measuring a $\ket{1}$ is $\sin^2(\theta)\cos^2(\varphi) + \cos^2(\theta)\sin^2(\varphi)$, as demonstrated in \cite{widdows2022nonlinear}. Note the structural similarity with the inputs to $(Ax)_0$ in Figure
    \ref{fig:mat_vec_prod_qc}.}
    \label{fig:adder_circuit}
\end{figure}

\section{Solving the Linear System using a Variational Quantum Algorithm}
\label{sec:vqa}

Now we propose a Variational Quantum Algorithm (VQA) designed to solve the linear system $A x \equiv b \mod 2$. 
As with any VQA, there are two main ingredients: a variational ansatz, and a cost function that serves as the optimization objective. 
The ansatz gives a circuit pattern or template, with gate parameters (typically angles) that can be varied and optimized.
In this report we consider one objective function and compare two different ansatze: a Rotations ansatz specially designed for the task,
and a more generic Brickwork ansatz.

The system as a whole is called $\mathrm{Mod2VQLS}$, which stands for Modulo-2 Variational Quantum Linear Solver.
Figure \ref{fig:mod2_vqls_circuit} illustrates the full variational circuit evaluated by $\mathrm{Mod2VQLS}$ when solving the linear system described in Example \ref{ex:linear_system}. In this case, the brickwork layout variational ansatz described in Section \ref{subsec:brickwork} is used, with $4$ layers.

\begin{figure}[!t]
    \centering
    \includegraphics[width=0.9\textwidth]{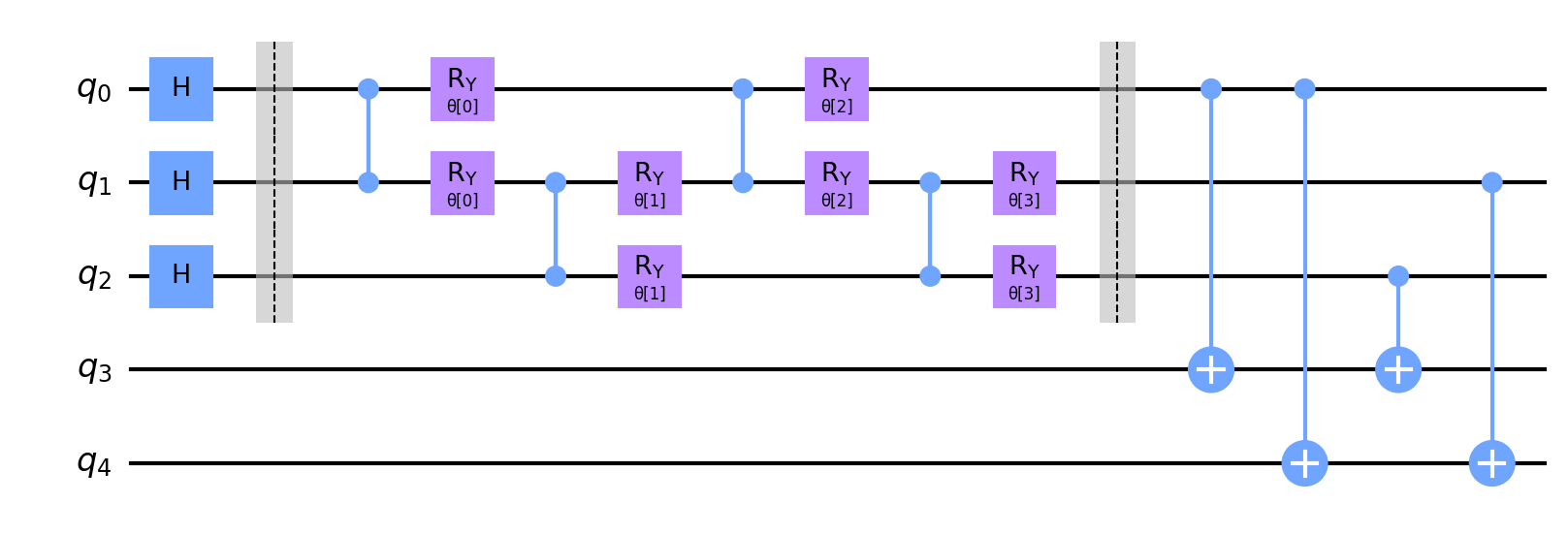}
    \caption{Variational circuit for solving $Ax = b$ as in Example \ref{ex:linear_system} using a brickwork layout ansatz with $2$ layers.}
    \label{fig:mod2_vqls_circuit}
\end{figure}

Our cost function here measures the overlap between the projector $\ket{\psi(\theta)}\bra{\psi(\theta)}$ and the subspace orthogonal to $\ket{b}$, which is given by
\begin{equation*}
    \widehat{C} = \mathrm{tr}\big(\ket{\psi(\theta)}\bra{\psi(\theta)}(\mathrm{id} - \ket{b}\bra{b})\big).
\end{equation*}
We note that this cost function has appeared before in \citep{bravoprieto2020variational} as Equation (3) in the setting of solving square linear systems with \textit{real} entries.

Some algebra shows that $\widehat{C}$ can be evaluated by computing the expected energy of an Ising Hamiltonian:
\begin{align}
\label{eq:cost as overlap}
    \widehat{C} &= \mathrm{tr}\big(\ket{\psi}\bra{\psi}(\mathrm{id} - \ket{b}\bra{b})\big) 
    = \bra{\psi}(\mathrm{id} - \ket{b}\bra{b})\ket{\psi}
    = 1 - \vert \braket{b \,\vert\, \psi} \vert^2,
\end{align}
where we have omitted the dependence on $\theta$ for simplicity of notation. Notice that the second term is the expected value of the rank-$1$ Ising Hamiltonian $\ket{b}\bra{b}$ with eigenvalue $1$ corresponding to the eigenstate $\ket{b}$, computed with respect to the variational state $\ket{\psi}$. Although it may be difficult to express this Hamiltonian as a linear combination of tensor products of Pauli matrices, we do not need to construct or even to know it explicitly: all we need is an oracle that can lazily evaluate the Hamiltonian's eigenvalues.

\subsection{Rotations Ansatz}
\label{subsec:rotations}

Now we turn to our variational ansatze. The first is a simple rotations pattern
that is particularly well-suited for the Mod2VQLS cost function $\widehat{C}$.
This ansatz is just a product of single-qubit rotations about the $Y$-axis; in particular, we take
\begin{equation}
\label{eq:rotation ansatz}
    V(\theta) = \prod_{j = 1}^n R_Y^j(\theta_j),
\end{equation}
with $R_Y^j$ denoting an $R_Y$ rotation applied on the $j^{\mathrm th}$r qubit and each parameter 
$\theta \in [-2\pi, 2\pi]^n$. This circuit is interesting because it does not add to the overall computational cost of each iteration, and it is amenable to direct mathematical analysis. 

In particular, we can derive an explicit formula for the variational cost as a function of the circuit parameters.

\begin{thm}
    For each $x \in \mathbb{F}_2^n$, define $\alpha_x(\theta) = \prod_{j=1}^n \cos(\theta_j/2)^{(1 - x_j)} \sin(\theta_j/2)^{x_j}$. In addition, let $A^{-1}(b) = \{x \in \mathbb{F}_2^n \mid Ax \equiv b \mod 2\}$ denote the inverse image of $b$ under $A$. Then
    $$\widehat{C}(\theta) = 1 - \sum_{x \in A^{-1}(b)} \alpha_x(\theta)^2$$
    with $\ket{\psi(\theta)} = \mathcal{A} V(\theta) \ket{0}$ and $V(\theta)$ denoting the ansatz defined by Relation \eqref{eq:rotation ansatz}.
\end{thm}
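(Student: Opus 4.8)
My plan is to expand $\ket{\psi(\theta)}$ explicitly in the computational basis and then read off the overlap with $\ket{b}$ directly. The starting point is that $V(\theta)$ acts only on the $n$-qubit input register, so $V(\theta)\ket{0} = \big(\bigotimes_{j=1}^n R_Y(\theta_j)\ket{0}\big)\otimes\ket{0}_m$. Since $R_Y(\theta_j)\ket{0} = \cos(\theta_j/2)\ket{0} + \sin(\theta_j/2)\ket{1}$, which I read off from the matrix in Figure \ref{fig:single-qubit-gates}, expanding the tensor product over the input register gives
\[
V(\theta)\ket{0} = \sum_{x\in\F^n}\alpha_x(\theta)\,\ket{x}\ket{0},
\]
where $\alpha_x(\theta) = \prod_{j=1}^n \cos(\theta_j/2)^{1-x_j}\sin(\theta_j/2)^{x_j}$ is precisely the coefficient that selects $\ket{0}$ or $\ket{1}$ on each wire according to the bit $x_j$. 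Thus the quantities $\alpha_x$ appearing in the statement are exactly the amplitudes of the variational state before $\mathcal{A}$ is applied.

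Next I would push $\mathcal{A}$ through the sum term by term. Because $\mathcal{A}$ is linear and, by Theorem \ref{thm:mat-vec qc}, satisfies $\mathcal{A}(\ket{x}\ket{0}) = \ket{x}\ket{Ax}$, we obtain
\[
\ket{\psi(\theta)} = \mathcal{A}V(\theta)\ket{0} = \sum_{x\in\F^n}\alpha_x(\theta)\,\ket{x}\ket{Ax}.
\]
The key structural fact is that the family $\{\ket{x}\ket{Ax}\}_{x\in\F^n}$ is orthonormal, since the input register $\ket{x}$ alone already distinguishes these basis vectors regardless of collisions in the values $Ax$.

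The final step is to evaluate the overlap term in Equation \eqref{eq:cost as overlap}. Reading $\ket{b}\bra{b}$ as the projector onto output register $=b$, that is, $\mathrm{id}_n\otimes\ket{b}\bra{b}$, consistent with measuring only the $m$-qubit output register, the orthonormality above collapses the resulting double sum to its diagonal, retaining only the terms with $Ax\equiv b$:
\[
\vert\braket{b\,\vert\,\psi(\theta)}\vert^2 = \bra{\psi}(\mathrm{id}_n\otimes\ket{b}\bra{b})\ket{\psi} = \sum_{x:\,Ax\equiv b}\alpha_x(\theta)^2 = \sum_{x\in A^{-1}(b)}\alpha_x(\theta)^2,
\]
where I use that the $\alpha_x(\theta)$ are real. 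Substituting into $\widehat{C} = 1 - \vert\braket{b\,\vert\,\psi}\vert^2$ then yields the claimed formula.

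I expect the only genuine subtlety to be the bookkeeping of registers in the overlap: $\ket{b}$ is an $m$-qubit vector while $\ket{\psi}$ lives on $m+n$ qubits, so $\braket{b\,\vert\,\psi}$ must be interpreted as a partial overlap on the output register, equivalently tracing out the input register before projecting onto $\ket{b}$. Everything else is a routine expansion. As a consistency check one verifies $\sum_{x\in\F^n}\alpha_x(\theta)^2 = \prod_{j=1}^n(\cos^2(\theta_j/2) + \sin^2(\theta_j/2)) = 1$, confirming that $\ket{\psi(\theta)}$ is normalized and hence that $\widehat{C}\in[0,1]$.
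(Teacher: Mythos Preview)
Your proof is correct and follows essentially the same route as the paper: expand $V(\theta)\ket{0}$ as $\sum_x \alpha_x(\theta)\ket{x}\ket{0}$, apply Theorem~\ref{thm:mat-vec qc} termwise to obtain $\ket{\psi(\theta)}=\sum_x \alpha_x(\theta)\ket{x}\ket{Ax}$, interpret $\ket{b}\bra{b}$ as $\mathrm{id}_n\otimes\ket{b}\bra{b}$, and collapse the double sum using orthonormality on the input register. Your explicit discussion of the register-mismatch subtlety and the normalization check are nice additions, but the argument itself matches the paper's.
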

\begin{proof}
    The theorem follows from direct calculation. First note that \begin{equation*}
        R_Y(\theta_j)\ket{0} = \cos(\theta_j/2) \ket{0} + \sin(\theta_j/2) \ket{1}.
    \end{equation*}
    So a quick induction, left to the reader as an exercise, shows that
    \begin{equation*}
        V(\theta) \ket{0}^{\otimes n} 
        = \bigotimes_{j=1}^n \big(R_Y^j(\theta_j) \ket{0}\big) 
        = \bigotimes_{j=1}^n \big(\cos(\theta_j/2) \ket{0} + \sin(\theta_j/2) \ket{1}\big)
        = \sum_{x \in \mathbb{F}_2^n} \alpha_x(\theta) \ket{x}.
    \end{equation*}

    Combined with Theorem \ref{thm:mat-vec qc}, the last equation yields an expression for our variational quantum state:
    \begin{equation}
    \label{eq:variational state}
        \ket{\psi(\theta)} = \mathcal{A} V(\theta) \ket{0}^{\otimes n} = \sum_{x \in \mathbb{F}_2^n} \alpha_x(\theta) \ket{x} \ket{Ax}.
    \end{equation}

    We use this expression to compute the variational cost with respect to $\ket{\psi(\theta)}$. In particular, let 
    \begin{equation*}
        \mathcal{H} = I \otimes \ket{b} \bra{b} 
        = \sum_{y \in \mathbb{F}_2^n} \ket{y}\bra{y} \otimes \ket{b} \bra{b}
    \end{equation*}
    and use Relation \eqref{eq:cost as overlap} to write
    \begin{equation*}
        1 - \widehat{C}(\theta) = |\langle b \,\vert\, \psi(\theta)\rangle|^2
        = \bra{\psi(\theta)} \mathcal{H} \ket{\psi(\theta)}.
\end{equation*}
    Now since $\alpha_x(\theta)$ is real, we see that
    \begin{align*}
        \bra{\psi(\theta)} \mathcal{H} \ket{\psi(\theta)}
        &= \sum_{x, y} \alpha_x(\theta) \alpha_y(\theta) \langle x \,\vert\, y \rangle \langle Ax \,\vert\, b \rangle \langle b \,\vert\, Ay \rangle \\
        &= \sum_{x \in \mathbb{F}_2^n} \alpha_x(\theta)^2 |\langle Ax \,\vert\, b \rangle|^2 \\
        &= \sum_{x \in A^{-1}(b)} \alpha_x(\theta)^2,
    \end{align*}
    so $\widehat{C}(\theta) = 1 - \sum_{x \in A^{-1}(b)} \alpha_x(\theta)^2$ as desired.
\end{proof}

Moreover, we can obtain a similar formula for the gradient of the cost function with respect to the circuit parameters.
\begin{thm}
\label{thm:cost grad}
For each $j = 1, \ldots, n$, the cost function $\widehat{C}(\theta)$ varies like
\begin{equation*}
    \frac{\partial}{\partial \theta_j}\widehat{C}(\theta) = \sum_{x \in A^{-1}(b)} (-1)^{1 - x_j} \alpha_x(\theta) \alpha_{x \oplus e_j}(\theta)
\end{equation*}
with respect to $\theta_j$. Here $\oplus$ denotes binary addition over $\F^n$ and $e_j$ denotes the $j^{\mathrm{th}}$ standard basis vector in $\F^n$.
\end{thm}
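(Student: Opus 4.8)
The plan is to differentiate the closed form for the cost supplied by the previous theorem, namely $\widehat{C}(\theta) = 1 - \sum_{x \in A^{-1}(b)} \alpha_x(\theta)^2$, rather than to re-derive anything from the variational state $\ket{\psi(\theta)}$. Since $A^{-1}(b)$ is a finite set and each $\alpha_x$ is a smooth (indeed entire) function of $\theta$, I may interchange $\partial/\partial\theta_j$ with the finite sum and apply the chain rule to each square, obtaining
\[
\frac{\partial}{\partial \theta_j}\widehat{C}(\theta) = -\,2 \sum_{x \in A^{-1}(b)} \alpha_x(\theta)\, \frac{\partial}{\partial \theta_j}\alpha_x(\theta).
\]
Everything then reduces to computing the single-coordinate derivative $\partial_{\theta_j}\alpha_x$.

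The key structural observation is that $\alpha_x(\theta) = \prod_{k=1}^n \cos(\theta_k/2)^{1-x_k}\sin(\theta_k/2)^{x_k}$ factors over the coordinates, and only the $j$-th factor depends on $\theta_j$. I would split into the two cases $x_j = 0$ and $x_j = 1$. When $x_j = 0$ the $j$-th factor is $\cos(\theta_j/2)$, whose derivative is $-\tfrac12\sin(\theta_j/2)$; when $x_j = 1$ the $j$-th factor is $\sin(\theta_j/2)$, whose derivative is $+\tfrac12\cos(\theta_j/2)$. In either case, differentiation replaces the $j$-th factor by the one belonging to $x \oplus e_j$ (it flips $x_j \mapsto 1 - x_j$) while leaving all other factors untouched. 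Collecting the scalar prefactor, this gives the compact identity
\[
\frac{\partial}{\partial \theta_j}\alpha_x(\theta) = \tfrac12\,(-1)^{1 - x_j}\,\alpha_{x \oplus e_j}(\theta),
\]
where the sign records that the cosine case carries a minus and the sine case a plus. Note that $\alpha_{x\oplus e_j}$ is a perfectly well-defined coefficient even when $x \oplus e_j \notin A^{-1}(b)$, so no care is needed about the index set here.

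Substituting this into the first display cancels the $\tfrac12$ against the $2$ coming from differentiating the square, leaving a sum of the form $\sum_{x\in A^{-1}(b)} (\pm1)\,\alpha_x(\theta)\,\alpha_{x\oplus e_j}(\theta)$, which is exactly the claimed shape. The one place I expect to have to be careful is the sign bookkeeping: there are three competing sign/scalar sources --- the overall minus from $1 - \sum\alpha_x^2$, the factor $2$ from the square, and the $\tfrac12(-1)^{1-x_j}$ from the half-angle derivative --- and a single misstep flips the reported sign of the gradient. Concretely, combining the overall minus with $(-1)^{1-x_j}$ produces $(-1)^{x_j}$, so I would recheck on the two scalar cases $x_j\in\{0,1\}$ explicitly to confirm whether the final coefficient is $(-1)^{x_j}$ or $(-1)^{1-x_j}$ before declaring the formula proved. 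This is the only genuinely error-prone step; there are no interchange-of-limits or convergence subtleties, since the sum is finite and every summand is entire in $\theta$.
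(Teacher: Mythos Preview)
Your approach is exactly the paper's: differentiate the closed form $\widehat{C}=1-\sum_{x\in A^{-1}(b)}\alpha_x^2$ term-by-term, and observe that $\partial_{\theta_j}\alpha_x=\tfrac12(-1)^{1-x_j}\alpha_{x\oplus e_j}$ because differentiation flips the $j$-th trigonometric factor. The paper's proof is in fact terser than yours --- it only writes out the $x_j=0$ case of the derivative of $\alpha_x$ and leaves the final assembly implicit --- so your sign vigilance is well placed: carrying the outer $-2$ through does give $(-1)^{x_j}$ rather than the stated $(-1)^{1-x_j}$, a discrepancy the paper's own proof does not resolve.
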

\begin{proof}
    The theorem follows from a direct computation using the chain rule. The key is to notice that differentiating $\alpha_x$ swaps a sine for a cosine and vice versa; for instance, if $x_j = 0$, and we fix $c_k = \cos\big(\frac{\theta_k}{2}\big)$ and $s_k = \sin\big(\frac{\theta_k}{2}\big)$ for simplicity of notation, we see that
    \begin{equation*}
        \frac{\partial}{\partial \theta_j}\alpha_x(\theta) 
        = \frac{\partial}{\partial \theta_j}\bigg[ \prod_{k=1}^n c_k^{1 - x_k} s_k^{x_k}\bigg] 
        = \frac{-s_j}{2} \bigg[ \prod_{k \neq j} c_k^{1 - x_k} s_k^{x_k}\bigg] 
        = -\frac{1}{2} \alpha_{x \oplus e_j}(\theta).
        \qquad\qquad\qedhere
    \end{equation*}
\end{proof}

Theorem \ref{thm:cost grad} helps us understand the critical points on the variational cost landscape. In particular, it shows that the gradient $\nabla \widehat{C}(\theta)$ is smooth, and moreover, each of its entries is a trigonometric polynomial on the $2n$ variables $c_j = \cos\big(\frac{\theta_j}{2}\big)$ and $s_j = \sin\big(\frac{\theta_j}{2}\big)$, for $j = 1, \ldots, n$. Thus every critical point of the cost surface satisfies $\nabla \widehat{C} = 0$. When combined with the Pythagorean identities $c_j^2 + s_j^2 = 1$, the equation $\nabla \widehat{C} = 0$ characterizes an algebraic variety defined by $2n$ polynomials in $2n$ variables. Hence Bezout's Theorem 
\citep[Ch 2]{fischer2001plane}
implies $\widehat{C}$ has at most $(2n)^n \cdot 2^n = (4n)^n$ real extrema.

It is interesting to note that when $(c_j, s_j)$ is an $8n$-th root of unity, our variational cost can be written in terms of \textit{quantum integers} or \textit{$q$-integers}, which are ubiquitous in $q$-calculus, the representation theory of quantized enveloping algebras and quantum groups, and in the theory of crystal bases, amongst others \cite{kac2015qcalculus,kassel1995qgroups,charipressley2000qgroups,kashiwara1990crystals,lusztig1990canonbases}. We conjecture that these points characterize the extrema of our cost function. The conjecture is motivated by the size of the variety characterized by $\nabla \widehat{C} = 0$ together with the Pythagorean identities as computed by Bezout's Theorem and by the explicit computations summarized by Theorem \ref{thm:optimal params}.

\begin{defn}
    For any complex $q \neq 1$ and any integer $n$, the \textit{quantum integer} or \textit{$q$-integer} $[n]_q$ is defined by
    \begin{equation*}
        [n]_q = \frac{q^n - q^{-n}}{q - q^{-1}} = q^{n-1} + q^{n - 3} + \cdots + q^{3 - n} + q^{1-n}.
    \end{equation*}
\end{defn}

In this setting, quantum integers appear in the coefficients $\alpha_x(\theta)$. In particular, let $\theta_j = \frac{\pi}{2n}p_j$ for any $p_j \in \{0, 1, \ldots, 4n-1\}$, let $\xi = e^{\frac{i \pi}{4n}}$ denote a primitive $8n$-th root of unity, and notice
\begin{equation*}
    \cos(\theta_j/2)^{1 - x_j} \sin(\theta_j/2)^{x_j}
    = \frac{\big(i^{1 - x_j} e^{\frac{i \pi \theta_j}{2}}\big) - \big(i^{1 - x_j} e^{\frac{i \pi \theta_j}{2}}\big)^{-1}}{2i}
    = \frac{[2n(1 - x_j) + p_j]_\xi}{[2n]_\xi}.
\end{equation*}

We use the last identity to prove the existence of globally optimal parameters for our variational circuit. It is important to observe these guarantees are not typically offered by VQAs; for instance, QAOA can only guarantee such parameters \textit{in the limit of infinite circuit depth}.

We will need the following identities.
\begin{lem}
\label{lem:q int rels}
    If $\xi = e^{\frac{i\pi}{4n}}$ denotes a primitive $8n$-th root of unity, then the following identities relating $\xi$-integers hold:
    \begin{equation*}
        [3n]_\xi = [n]_\xi = \frac{[2n]_\xi}{\sqrt{2}}, 
        \quad\text{and}\quad
        [4kn]_\xi = 0 \quad\text{for any}\quad k \in \mathbb{Z}.
    \end{equation*}
\end{lem}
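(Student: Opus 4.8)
The plan is to reduce every $\xi$-integer appearing in the statement to a ratio of sines and then evaluate. Starting from the definition $[m]_\xi = \frac{\xi^m - \xi^{-m}}{\xi - \xi^{-1}}$ and applying Euler's formula with $\xi = e^{i\pi/(4n)}$, I would first record the closed form
$$[m]_\xi = \frac{2i\sin\!\big(\tfrac{m\pi}{4n}\big)}{2i\sin\!\big(\tfrac{\pi}{4n}\big)} = \frac{\sin\!\big(\tfrac{m\pi}{4n}\big)}{\sin\!\big(\tfrac{\pi}{4n}\big)},$$
valid for every integer $m$. The denominator is $\xi - \xi^{-1} = 2i\sin(\pi/(4n))$, which is nonzero: since $\xi$ is a primitive $8n$-th root of unity it satisfies $\xi \neq \pm 1$, equivalently $\xi \neq \xi^{-1}$, so the $q$-integer is well defined and the cancellation of the factor $2i$ is legitimate. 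This single formula is the engine of the whole lemma, turning each claim into an elementary evaluation of a sine.

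With this in hand I would substitute the four relevant values of $m$. Taking $m = n$ gives numerator $\sin(\pi/4) = 1/\sqrt 2$; taking $m = 3n$ gives $\sin(3\pi/4) = 1/\sqrt 2$, the same numerator over the same denominator, so $[3n]_\xi = [n]_\xi$ at once. Taking $m = 2n$ gives $\sin(\pi/2) = 1$, so $[2n]_\xi = 1/\sin(\pi/(4n))$; dividing by $\sqrt 2$ reproduces exactly the value of $[n]_\xi$, which establishes the middle equality $[n]_\xi = [2n]_\xi/\sqrt 2$. Finally $m = 4kn$ yields $\sin(k\pi) = 0$ for every $k \in \mathbb{Z}$, hence $[4kn]_\xi = 0$, giving the last identity.

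There is no serious obstacle here; the argument is a direct computation once the sine formula is recorded. The only point meriting explicit mention is the common denominator $\sin(\pi/(4n))$: because it is independent of $m$ and strictly positive for $n \geq 1$, it factors out uniformly and never vanishes, so each comparison of $\xi$-integers collapses to a comparison of their sine numerators. I would state this observation before dividing by $[2n]_\xi$, so that the passage to the middle equality carries no hidden division by zero.
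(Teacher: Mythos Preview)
Your proposal is correct and follows essentially the same route as the paper: both arguments rewrite $[m]_\xi$ via Euler's formula as $\dfrac{\sin(m\pi/(4n))}{\sin(\pi/(4n))}$ (the paper keeps the $\dfrac{2i}{\xi-\xi^{-1}}$ factor explicit rather than cancelling it immediately) and then evaluate at $m=n,3n,2n,4kn$. Your presentation is slightly more streamlined in that you record the sine formula once and apply it uniformly, whereas the paper handles each identity with an ad hoc calculation, but the substance is identical.
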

\begin{proof}
    The lemma follows from direct calculation. For example, 
    \begin{equation*}
        [3n]_\xi = \frac{\xi^{3n} - \xi^{-3n}}{\xi - \xi^{-1}} 
        = \frac{2i}{\xi - \xi^{-1}} \cdot \sin\big(3n \cdot \frac{\pi}{4n}\big) 
        = \frac{2i}{\xi - \xi^{-1}} \cdot \sin\big(n \cdot \frac{\pi}{4n}\big)
        = [n]_\xi.
    \end{equation*}
    The second equality follows because $\sin\big(n\cdot \frac{\pi}{4n}\big) = 1/\sqrt{2}$ and $\xi^{2n} = i$, so that $\xi^{2n} - \xi^{-2n} = 2i$. The last equality holds because $\xi$ is an $8n$-th root of unity; in particular, $\xi^{4kn} = \xi^{-4kn}$.
\end{proof}

\begin{thm}
\label{thm:optimal params}
    Let $\theta = \frac{\pi}{2n} p$ for some $p \in \mathbb{Z}^n$. First suppose $p = 2n x'$ for some $x' \in \{0, 1\}^n$. Then $\widehat{C}(\theta) = 0$ is a global minimum of $\widehat{C}\colon \mathbb{R}^n \to \mathbb{R}$ if $x'$ satisfies $Ax' = b$; otherwise, $\widehat{C}(\theta) = 1$ is a global maximum of the cost function.
    
    Now assume $p_1 = \cdots = p_n = k n$ for some odd integer $k$ and let $\mathrm{rk}(A)$ denotes the rank of $A$ over $\F$. In this case $\widehat{C}(\theta) = 1 - 2^{-\mathrm{rk}(A)}$.
\end{thm}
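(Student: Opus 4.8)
The plan is to treat both parts as direct evaluations of the cost formula $\widehat{C}(\theta) = 1 - \sum_{x \in A^{-1}(b)} \alpha_x(\theta)^2$ from the preceding theorem, specializing $\theta = \frac{\pi}{2n}p$ to the two families of parameters and reading off the coefficients $\alpha_x(\theta)$ explicitly. The only genuinely non-elementary ingredient is a counting argument for $|A^{-1}(b)|$ in the second part; everything else reduces to evaluating cosines and sines at special angles.

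For the first part, I would substitute $p = 2nx'$, which gives $\theta_j = \pi x_j'$, so that $\cos(\theta_j/2)$ and $\sin(\theta_j/2)$ lie in $\{0,1\}$ with exactly one of them nonzero in each coordinate. Reading off $\alpha_x(\theta) = \prod_j \cos(\theta_j/2)^{1-x_j}\sin(\theta_j/2)^{x_j}$ under the convention $0^0 = 1$, each factor equals $1$ when $x_j = x_j'$ and $0$ otherwise, so $\alpha_x(\theta) = \delta_{x,x'}$ and the variational state collapses to the computational basis state $\ket{x'}\ket{Ax'}$. The sum $\sum_{x \in A^{-1}(b)}\alpha_x(\theta)^2$ is then $1$ exactly when $x' \in A^{-1}(b)$, i.e. $Ax' = b$, and $0$ otherwise, yielding $\widehat{C}(\theta) \in \{0,1\}$ as claimed. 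To upgrade these to global extrema I would note that $\widehat{C} = 1 - |\braket{b | \psi(\theta)}|^2$ with both $\ket{b}$ and $\ket{\psi(\theta)}$ unit vectors (the latter because $\mathcal{A}$ and $V(\theta)$ are unitary, equivalently $\sum_x \alpha_x(\theta)^2 = \prod_j(\cos^2(\theta_j/2) + \sin^2(\theta_j/2)) = 1$), so $\widehat{C}$ takes values in $[0,1]$ everywhere; hence $\widehat{C}=0$ is a global minimum and $\widehat{C}=1$ a global maximum.

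For the second part, substituting $p_j = kn$ gives $\theta_j = k\pi/2$ with $k$ odd, so $\theta_j/2$ is an odd multiple of $\pi/4$ and $|\cos(\theta_j/2)| = |\sin(\theta_j/2)| = 1/\sqrt 2$ in every coordinate, whence $\alpha_x(\theta)^2 = 2^{-n}$ for every $x \in \F^n$, independently of $x$. Equivalently --- and this is the intended role of Lemma \ref{lem:q int rels} --- the $q$-integer reformulation preceding it writes each factor as $[2n(1-x_j)+p_j]_\xi/[2n]_\xi$, and the relations $[n]_\xi = [3n]_\xi = [2n]_\xi/\sqrt 2$ make this ratio $\pm 1/\sqrt 2$ for the representative odd residues $k \equiv 1, 3 \pmod 8$, the residues $k \equiv 5, 7$ differing only by sign through the sine formula and the periodicity $\xi^{8n} = 1$. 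Either way the cost collapses to $\widehat{C}(\theta) = 1 - |A^{-1}(b)|\,2^{-n}$.

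The main obstacle --- and really the only substantive step --- is evaluating $|A^{-1}(b)|$. Here I would invoke the standard fact that, whenever the system is solvable, the solution set $A^{-1}(b)$ is a coset of $\ker A$ in $\F^n$, so $|A^{-1}(b)| = |\ker A| = 2^{\dim \ker A} = 2^{n - \mathrm{rk}(A)}$ by rank--nullity over $\F$. Substituting gives $\widehat{C}(\theta) = 1 - 2^{n - \mathrm{rk}(A)}\,2^{-n} = 1 - 2^{-\mathrm{rk}(A)}$. I would flag explicitly that this formula presumes $b \in \mathrm{im}\,A$, since otherwise $A^{-1}(b) = \emptyset$ and $\widehat{C} \equiv 1$; as the entire point is to solve $Ax = b$, solvability is the natural standing assumption.
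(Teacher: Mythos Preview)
Your proof is correct and follows the same overall skeleton as the paper's: bound $\widehat{C}$ in $[0,1]$, evaluate $\alpha_x(\theta)$ explicitly at the specified parameters, and invoke rank--nullity for $|A^{-1}(b)|$. The difference lies in the computational tool. You evaluate the trigonometric factors directly---$\theta_j/2 \in \{0,\pi/2\}$ in the first case, $\theta_j/2$ an odd multiple of $\pi/4$ in the second---so the identities $\cos 0 = \sin(\pi/2) = 1$ and $|\cos(k\pi/4)| = |\sin(k\pi/4)| = 1/\sqrt{2}$ give $\alpha_x(\theta) = \delta_{x,x'}$ and $\alpha_x(\theta)^2 = 2^{-n}$ immediately. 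The paper instead routes both computations through the $q$-integer reformulation, using $[4n]_\xi = 0$ from Lemma~\ref{lem:q int rels} to force the Kronecker delta in part one, and the summation rule together with $[n]_\xi = [3n]_\xi = [2n]_\xi/\sqrt{2}$ to reduce part two. Your argument is shorter and more elementary, and it makes clear that the $q$-integer machinery is not essential to \emph{this} theorem; the paper's version buys consistency with the surrounding narrative about roots of unity and the conjectured description of critical points. Your explicit remark that the formula in part two presumes $b \in \mathrm{im}\,A$ is a point the paper leaves implicit.
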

\begin{proof}
    First notice our cost function is non-negative; for any $\theta \in \mathbb{R}^n$, we have
    \begin{equation*}
        \widehat{C}(\theta) = 1 - \sum_{x \in A^{-1}(b)} \alpha_x(\theta)^2 
        \geq 1 - \sum_{x \in \mathbb{F}_2^n} \alpha_x(\theta)^2
        = 1 - \langle \psi(\theta) \,\vert\, \psi(\theta) \rangle = 0.
    \end{equation*}
    This follows from Relation \eqref{eq:variational state}, which writes $\ket{\psi(\theta)}$ as a superposition over a tensor product of computational basis states with amplitudes given by $\alpha_x(\theta)$.
    In addition, $\widehat{C}$ is bounded above by $1$ because each $\alpha_x(\theta)^2$ is non-negative. This means $0 \leq \widehat{C}(\theta) \leq 1$ for every $\theta \in \mathbb{R}^n$.
    
    Thus, to prove the first claim, it remains to show that when $p = 2n x'$, the cost function vanishes if $x'$ belongs to the inverse image $A^{-1}(b)$ and it is unity otherwise. This follows from direct calculation:
    \begin{align*}
        1 - \widehat{C}(\theta) &= \sum_{x \in A^{-1}(b)} \alpha_x(\theta)^2
        = \frac{1}{[2n]_\xi^{2n}} \sum_{x \in A^{-1}(b)} \prod_{j = 1}^n [2n(1 - x_j + x_j^*)]_\xi^2 \\
        &= \frac{1}{[2n]_\xi^{2n}} \sum_{x \in A^{-1}(b)} \delta_{x, x'} [2n]_\xi^{2n} \\
        &= \sum_{x \in A^{-1}(b)} \delta_{x, x'}.
    \end{align*}
    Note that the product in the second equality vanishes whenever $x \neq x'$ because $[4n]_\xi = 0$.

    For the second claim, write $k = 4m + r$ for some integer $m$ and with $r = 1, 3$. Now recall the well-known ``summation rule'' $[s + t]_q = q^s [t] + q^{-t} [s]_q$ for quantum integers \cite[Relation~V1.1.2]{kassel1995qgroups}. When combined with Lemma \ref{lem:q int rels}, the summation rule implies
    \begin{equation}
    \label{eq:q int sum}
        [kn]_\xi = \xi^{rn} [4mn]_\xi + \xi^{-4mn} [rn]_\xi = (-1)^m [rn]_\xi.
    \end{equation}
    In addition, note that if $r = 3$, Relation \ref{eq:q int sum} implies $[(r+2)n]_\xi = [5n]_\xi = -[rn]_\xi$. Thus if $\mathbf{1}^T x \in \mathbb{Z}$ denotes the (integer) sum of the elements of $x$, we see that
    \begin{align*}
        1 - \widehat{C}(\theta) &= \sum_{x \in A^{-1}(b)} \alpha_x(\theta)^2
        = \frac{1}{[2n]_\xi^{2n}} \sum_{x \in A^{-1}(b)} \prod_{j = 1}^n [2n(1 - x_j) + kn]_\xi^2 \\
        &= \frac{1}{[2n]_\xi^{2n}} \sum_{x \in A^{-1}(b)} \big([(k + 2)n]_\xi^{n - \mathbf{1}^T x} [kn]_\xi^{\mathbf{1}^T x}\big)^2 \\
        &= \frac{1}{[2n]_\xi^{2n}} \sum_{x \in A^{-1}(b)} \big([(r + 2)n]_\xi^{n - \mathbf{1}^T x} [rn]_\xi^{\mathbf{1}^T x}\big)^2 \\
        &= \frac{1}{[2n]_\xi^{2n}} \sum_{x \in A^{-1}(b)} [n]_\xi^{2n} \\
        &= \frac{|A^{-1}(b)|}{\sqrt{2}^{2n}}.
    \end{align*}
    In the fourth equality we used Lemma \ref{lem:q int rels} to simplify $[(r + 2)]_\xi^2 = [rn]_\xi^2 = [n]_\xi^2$. In the fifth equality we applied Lemma \ref{lem:q int rels} again, this time substituting $[n]_\xi = [2n]_\xi / \sqrt{2}$. 

    To conclude, we recall the solution set $A^{-1}(b)$ is an affine space in $\F^n$ with one point for each element of the kernel of $A$. Hence if we let $\mathrm{null}(A)$ denote the dimension of $\ker{A}$ over $\F$ and we recall the Rank-Nullity Theorem we obtain the desired result:
    \begin{equation*}
        \widehat{C}(\theta) = 1 - \frac{|A^{-1}(b)|}{\sqrt{2}^{2n}} = 1 - \frac{2^{\mathrm{null}(A)}}{2^n} = 1 - 2^{-\mathrm{rk}(A)}. 
        \qquad\qquad\qedhere
    \end{equation*}
\end{proof}

\subsection{Brickwork Ansatz}
\label{subsec:brickwork}

The other variational ansatz we consider is comprised by a brickwork layout of parametrized two-qubit gates. For instance, Figure \ref{fig:brickwork_ansatz} illustrates a brickwork layout ansatz with a depth of $5$ layers on $4$ qubits. The two-qubit ``brick'' 
design was used by authors including \cite{bravoprieto2020variational} for general linear solving, and \cite{niu2022holographic}
for electron simulation. Various other general-purpose ansatz designs could be tried.

\begin{figure}[!ht]
    \centering
    \includegraphics[width=0.9\textwidth]{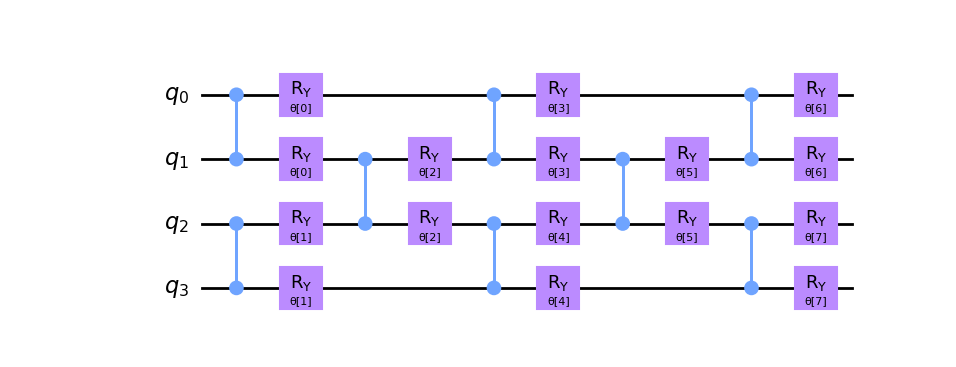}
    \caption{A brickwork layout ansatz on $4$ qubits with a depth of $5$ layers.}
    \label{fig:brickwork_ansatz}
\end{figure}

With enough layers, it should be possible for the brickwork ansatz to provide solutions to the equation $Ax = b$,
and intuitively, we expected that convergence would be slower but that the method might find more solutions overall.
In practice, the convergence was considerably slower, and in most cases, the number of extra solutions found was modest.

\subsection{Alternatives}

There are several alternatives and ways this design could be varied.
For instance, one could use a different two-qubit block in the brickwork layout ansatz. 
More generally, we could use a different ansatz altogether. In particular, it might be beneficial to incorporate linear-algebraic information into the ansatz, such as something that encourages the superposition to be orthogonal to the range of $A$, in case $b = 0$.
Also, a swap test could be used for computing $\widehat{C}$ as part of the quantum circuit itself \cite{buhrman2001quantum}. 
This may serve as a way to mitigate the measurement error, at the cost of requiring additional qubits.

\section{Experimental Results}
\label{sec:results}

This section describes the results of experiments evaluating the performance of Mod2VQLS solvers using both the rotation and brickwork ansatze.

In each dimension from 1 to 9, we generated ten consistent binary linear systems by constructing $n \times n$ matrices $A$ and $n$-vectors $x$ with independent and uniformly selected entries in $\{0, 1\}$ and then computing $Ax = b$. Then we solved the systems using Mod2VQLS, using both the brickwork and rotation ansatze. Solving a linear system entails optimizing the model's variational parameters and then measuring the optimized state $\ket{\psi(\theta^*)}$. We tested all the computational states observed in the optimized superposition in order to determine whether they were valid solutions to the linear system in question. We counted the number of distinct valid and invalid solutions proposed, as well as the average number of iterations required for convergence of the variational method. We used SciPy's COBYLA implemetation to update the variational parameters in our circuits; we note this optimization routine only requires a single quantum circuit execution per iteration. Circuits were coded and simulated using the Qiskit Python package \cite{qiskit2021textbook}.

In the Brickwork case, the number of layers in the variational ansatz is an extra configuration parameter. From experimenting, we found that at least 2 layers were needed, and that matching the number of layers / parameters to the number of dimensions produced a reasonable tradeoff between solution iterations and correctness.

Our results are presented in Table \ref{tab:small_dim_results}. The rotations ansatz performed quite simply and effectively, always finding a correct solution, and occasionally finding more than one, with a reasonably small number of iterations. The rotations ansatz proposed no invalid solutions.
The brickwork ansatz was more costly and error-prone, using more iterations and producing some invalid solutions. The brickwork ansatz was also able to find a greater variety of valid solutions, though not dramatically so.

\begin{table}[t]
    \centering
    \begin{tabular}{|c|rrrr|rrrr|}
    \hline
 & \multicolumn{4}{c|}{\parbox[r]{3cm}{\quad \\ Brickwork Ansatz\\ \quad}} 
 &  \multicolumn{4}{c|}{\parbox[r]{2.5cm}{\quad \\ Rotation Ansatz\\ \quad}} \\
 Dim  & Solved & Valid & Invalid & Average & Solved & Valid & Invalid & Average \\
 & & Solns & Solns & Iterations & & Solns & Solns & Iterations \\
\hline
1 & 8  & 8 & 2 & 1 & 10 & 10 & 0 & 2 \\
2 & 10 & 16 & 4 & 1.7 & 10 & 14 & 0 & 3.7 \\
3 & 10 & 14 & 5 & 1 & 10 & 11 & 0 & 9.2 \\
4 & 9  & 9 & 1 & 38.8 & 10 & 11 & 0 & 16.3 \\
5 & 10 & 19 & 0 & 93.2 & 10 & 11 & 0 & 17.3 \\
6 & 10 & 15 & 0 & 177.4 & 10 & 11 & 0 & 18.9 \\
7 & 10 & 11 & 1 & 86.2 & 10 & 10 & 0 & 24.6 \\
8 & 10 & 11 & 0 & 120.2 & 10 & 10 & 0 & 29.4 \\
9 & 10 & 10 & 0 & 122.4 & 10 & 10 & 0 & 33.5 \\
\hline
    \end{tabular}
    \caption{Experimental results using Brickwork and Rotation Ansatze in various dimensions,
    using 10 randomly-generated matrices in each dimension, showing how many valid and invalid solutions
    were proposed, and how many iterations used, by each method.} 
    \label{tab:small_dim_results}
\end{table}

Figure \ref{fig:iterations_per_dim} show the growth in the number of iterations used
by Mod2VQLS with the rotations ansatz. The growth is roughly linear in small dimensions,
with a slope of (just under) 4.
This allows for comparison with block Wiedemann, as explained in Section \ref{sec:related} below.

\begin{figure}
    \centering
    \includegraphics[width=8cm]{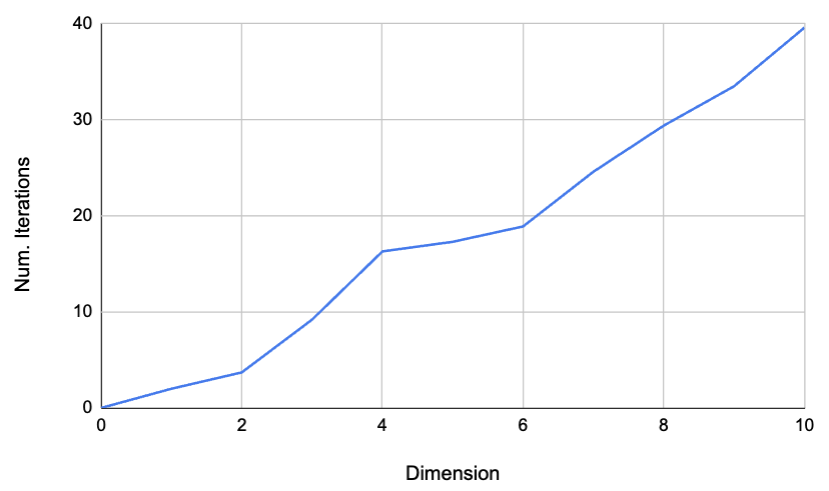}
    \caption{Average number of iterations needed to find a solution for each dimension using the rotations ansatz. The number of iterations
    grows approximately linearly, with a slope just under 4.}
    \label{fig:iterations_per_dim}
\end{figure}

These experiments demonstrate proof-of-concept, in that the Mod2VQLS system does find correct solutions.
They also demonstrate that there are performance tradeoffs, and that different choices of ansatz may 
be appropriate, depending on whether the task requires for searching for any solution, or
a more exhaustive search for all solutions.

\section{Related Classical and Quantum Methods for Solving Linear Systems}
\label{sec:related}

To begin we comment on the computational complexity of our $\mathrm{Mod2VQLS}$ to the block Wiedemann method when $m = n$, which is the fastest known classical algorithm for dealing with sparse unstructured systems over finite fields \cite{coppersmith1994blockwiedemann}. This method requires a linear number of matrix-vector multiplications plus a quadratic number of arithmetic operations in $\F$. In fact, large scale implementations,  like the one leveraged in the record-breaking RSA-$240$ calculation described in \cite{boudot2020difficulty}, have used $\geq 3n$ matrix-vector multiplications and it is not known if less than $2n$ can be used \cite{kaltofen1995wiedemannanalysis}. In our quantum setting, we consider the cost of executing our matrix-vector product circuit $\mathcal{A}$ from Section \ref{sec:mat-vec-product}. The results from our numerical experiments using the rotations ansatz described in Section \ref{subsec:rotations}, as recorded in Table \ref{tab:small_dim_results}, indicate that the gate complexity of our simple quantum algorithm is on par with the state-of-the-art block Wiedemann approach: the number of matrix-vector multiplications needed to achieve a solution is linear in the system dimension and we do not require the additional quadratic number of field operations or linear storage. (However, our method requires at least one qubit per variable, so large-scale systems encountered in practice will require larger quantum computers.)

Now we turn to highlighting some important distinctions between $\mathrm{Mod2VQLS}$ and quantum linear solvers that have appeared previously, like the VQLS algorithm in \citep{bravoprieto2020variational} and the renowned HHL algorithm \citep{harrow2009quantum} (some of whose variants and improvements are described in \citep{dervovic2018quantum}). To date, works on matrix multiplication and solving linear equations using quantum computing have addressed the cases of real- and complex-valued matrices. As far as we know, our proposal here is the first attempt to address this problem matrices over $\F$. 

In addition, $\mathrm{Mod2VQLS}$ can tackle matrices of any size and rank, whereas both HHL and VQLS are restricted to square matrices with full rank (invertible operators). Indeed the runtime of both HHL and VQLS depends on the condition number of the coefficient matrix, so effectively they assume the coefficient matrix is not only invertible, but \textit{robustly} so (well-conditioned).

The ability to work with matrices of different sizes and ranks is especially beneficial for solving problems where the matrix sizes are not fixed in advance.
A case-in-point is the integer factoring problems investigated by \cite{aboumrad2023lattice}, which motivated the development of $\mathrm{Mod2VQLS}$.
In this application, each equation (each row of the matrix $A$) is discovered independently in a massively-parallel data collection phase.
We are not looking for a unique solution to a fixed system of equations, but {\it any} solution to any subset of this system of equations.
The  $\mathrm{Mod2VQLS}$ works especially well for this, because the optimized variational circuit produces a superposition over multiple solutions to $Ax = b$.

On the flip side, while HHL and VQLS \textit{promise} an exponential speed-up over classical solvers, because they leverage a dense amplitude encoding that requires $\log_2 n$ qubits to represent the linear system $A' x' = b'$ with an $n \times n$ coefficient matrix $A'$, $\mathrm{Mod2VQLS}$ can only promise a polynomial speed-up over the fastest-known classical solvers (which are already polynomial), because $\mathrm{Mod2VQLS}$ requires circuits with $m + n$ qubits for solving the system $Ax = b$ with an $m \times n$ coefficient matrix $A$.

However, the $\mathrm{Mod2VQLS}$ encoding furnishes several advantages that address the HHL caveats delineated by \citep{aaronson2015read}. For starters, in HHL the load vector $b'$ must be loaded onto the quantum processor and this task in itself may require an exponential number of steps (with respect to the number of qubits in the circuit) \cite[Caveat 1]{aaronson2015read}. By contrast, in our setting the load vector $b$ is merely an $m$-bit string so it corresponds to an $m$-qubit computational basis state. As such, it can be efficiently loaded on the quantum computer using at most $m$ NOT-- or Pauli-$X$-- gates, although it turns out it is not even necessary to load $b$ onto the quantum processor in our current implementation.

Moreover, HHL assumes that the quantum computer can apply the unitary operator $e^{-iA' t}$ efficiently, for various values of $t$ \citep[Caveat 2]{aaronson2015read}. In our setting the closest analogue is applying the matrix-vector product operator defined below, and doing so requires precisely $N$ two-qubit gates, where $N$ is the number of non-zero entries in $A$. The complexity here indicates that $\mathrm{Mod2VQLS}$ directly benefits from the sparsity of $A$, which is useful in applied settings where one typically deals with large, (very) sparse matrices. For instance, the matrices encountered in recent large factoring calculations have millions of rows and columns, but only a few hundred non-zero entries per row.

Finally, extracting the solution vector $x'$ upon executing the HHL algorithm requires an exponential number of circuit measurements \citep[Caveat 4]{aaronson2015read}. Again, the exponential here is with respect to the number of qubits in the HHL circuit. By contrast, in our modulo $2$ setting the solution vector $x$ is an $n$-bit string, which corresponds to an $n$-qubit computational basis state, so it can be read-off from the optimized quantum ansatz using a fixed number of shots. An added benefit of our method is that the optimized quantum ansatz obtained upon running $\mathrm{Mod2VQLS}$ is a superposition over computational basis states corresponding to every possible solution to $Ax = b$, so we can effectively sample the solution set by measuring the optimized quantum state. This is quite useful in the factoring application, where we typically need to find more than one element in the kernel of the coefficient matrix in order to build a factor.

There are a few additional differences that distinguish our present context from that of Bravo-Prieto et al. \citep{bravoprieto2020variational}. 
They use a dense encoding which means that their cost function can be evaluated by executing quantum circuits with a logarithmic number of qubits. This is in contrast to our encoding, which requires a linear number of qubits. 
With the particular implementation details of the VQLS routine explained in \citep{bravoprieto2020variational} in mind, we see that our encoding has several benefits. For instance, in this context $\mathcal{A}$ is a unitary operation that can be performed using $N$ gates, where $N$ is the number of non-zero entries in the coefficient matrix. By contrast, the $A$ operator in \citep{bravoprieto2020variational} is a \textit{linear combination} of $L$ unitaries, and $L$ may be exponential in the number of qubits in their circuit. Moreover, the authors do not explain how to achieve such a decomposition in general; however, in Appendix A, they provide some hints for how to proceed when the coefficient matrix is sparse. In this section, they assume access to some oracle $\mathcal{O}_A$ that can effectively query the entries of $A$. Regardless, it's not clear how to implement this oracle and indeed the authors comment that, ``The gate complexity of both of these strongly depends on the form and precision with which the matrix elements of $A$ are specified'' \citep[Appendix A]{bravoprieto2020variational}.

Also in \citep{bravoprieto2020variational}, the cost function $\widehat{C}_G$ vanishes when the norm of $\ket{\psi}$ vanishes, making it deficient in the sense that non-solutions to $A x \equiv b \mod 2$ correspond to local extrema. By contrast, the variational state $\ket{\psi(\theta)}$ used here is normalized because $\mathcal{A} V(\theta)$ is a unitary operation. Thus there is no need to normalize the cost function, as in Equation (5) in \citep{bravoprieto2020variational}, and consider the more complicated so-called \textit{local} cost functions.

\section{Conclusion}

Variational quantum circuits provide design patterns that can be applied to a range of mathematical problems,
especially ones that can be expressed as optimization problems with respect to some cost function.
This paper demonstrated the Mod2VQLS system, which applies this design pattern to the problem of solving
linear systems modulo 2. The key ingredients are a circuit for computing the matrix multiplication, 
a variational ansatz including parameters to optimize, and a classical optimization process.
The rotation ansatz provided the most direct path to results, largely because its simple design made it amenable to analytical methods.

At the scales available to quantum computers today, the results here are potentially promising, but do not
compete with classical solvers. The goal of this research is to investigate potential advantages at larger scale:
we expect that medium-scale quantum computers will find their first regular commercial uses as part of
larger hybrid pipelines. Understanding the quantum opportunities, and in particular, their scaling properties
on real data sizes and distributions, will be crucial for guiding the choice between quantum opportunities.
The Mod2VQLS system presented here is a worked example of how such proposals can be implemented and 
investigated today.

\bibliography{ionq}

\ifarxiv
    \bibliographystyle{plainnat}
\fi

\end{document}